\newtheorem{theorem}{Theorem}[section]
\newtheorem{lemma}[theorem]{Lemma}
\begin{document}

\title[gauge transformation of constrained discrete KP]{The gauge transformation of the constrained semi-discrete KP hierarchy}
\author{Maohua Li$^{1,2}$, Jipeng Cheng$^3$,  Jingsong
He$^2$$^*$} \dedicatory {
1. School of Mathematical Sciences, USTC, Hefei, 230026 Anhui, China;\\
2. Department of Mathematics, Ningbo University, Ningbo, 315211 Zhejiang, China;\\
3. Department of Mathematics,  CUMT, Xuzhou, 221116 Jiangsu,
 China }
\date{}

\thanks{$^*$ Corresponding author: hejingsong@nbu.edu.cn}

%%%%%%%%%%%%%%%%%%%%%%%%%%%%%%%%%%%%%%%%%%%%%%%%
\begin{abstract}
In this paper, the  gauge transformation of the constrained semi-discrete KP(cdKP) hierarchy
is constructed explicitly by the suitable choice of the generating functions.
Under the $m$-step successive gauge transformation $T_m$, we give the transformed (adjoint) eigenfunctions and the $\tau$-function of the transformed Lax operator of the cdKP hierarchy.
\end{abstract}

%%%%%%%%%%%%%%%%%%%%%%%%%%%%%%%%%%%%%%%%%%%%%%%%

\maketitle
\noindent Mathematics Subject Classifications(2000):  37K10, 37K40, 35Q51, 35Q55.\\
Keywords: { constrained semi-discrete KP hierarchy, gauge transformation, discrete integrable system} \\
%\tableofcontents \allowdisplaybreaks \setcounter{section}{0}

\section{Introduction}
The semi-discrete Kadomtsev-Petviashvili (dKP) hierarchy
\cite{Kupershimidt,Iliev,LiuS2,czli2012,zhangdj1,zhangdj2}  is  an attractive
research object in the field of the discrete integrable systems.
The dKP hierarchy is defined by means of the difference derivative
$\triangle$ instead of the usual derivative $\partial$ with respect
of $x$ in a classical system \cite{dkjm,dl1}, and the continuous
spatial variable is replaced by a discrete variable $n$.
By using a non-uniform shift of space variable, the $\tau$-function of KP hierarchy implies a special kind of $\tau$-function  for the
semi-discrete KP hierarchy \cite{Iliev}.
The ghost symmetry of the dKP hierarchy is constructed by using the additional
symmetry \cite{czli2012}. The dKP hierarchy possesses an infinite dimensional algebra  structure \cite{zhangdj1}.
 Very recently, the continuum limit of the dKP hierarchy is given
 in ref. \cite{zhangdj2}.

Gauge transformation is one kind of powerful method to construct the
solutions of the integrable systems for both the continuous KP
hierarchy \cite{chau_cmp1992,oevel1993,nimmo,oevelRMP93,MWX_LMP1997,hlc2002}
and the dKP hierarchy\cite{oe,LiuS}. The multi-fold of this transformation is expressed
directly by determinants \cite{hlc2002, LiuS}, and is used to construct multiple wave solutions to the generalized KP and
BKP equations\cite{MWX-AMC(2012)}. This transformation is also applicable to
the so-called constrained KP hierarchy \cite{wo1,anp3,cst1,wlg1,hlc2}.
It is known that there are two types of gauge transformation: differential type $T_d$
and integral type $T_i$. Because of the reduction conditions of the BKP hierarchy and the CKP
hierarchy, it is necessary to consider the pair of $T_d$ and $T_i$. This has been
used to construct the two-peak soliton \cite{hcr1} and to construct the gauge transformation of the
constrained BKP hierarchy and the constrained CKP hierarchy \cite{hwc1}.  And it is known that the KP hierarchy has been generalized to constrained flows and extended flows with self-consistent sources \cite{MWX2010}.
It is interesting to note that
the determinant representation of the dressing transformation of the
extended two-dimensional Toda lattice hierarchy is also developed \cite{liuxijun}.

Similar to the constrained KP hierarchy \cite{{kss,chengyiliyishenpla1991,
chengyiliyishenjmp1995}}, the constrained semi-discrete KP(cdKP) hierarchy
\cite{lmh2} is defined by $\frac{\partial L}{\partial t_l} = [(L^l)_+,L]$ with a Lax operator
$L =\triangle + \sum_{i=1}^{m}q_i(t)\triangle^{-1}r_i(t)$.
 The so called discrete non-linear Schr\"{o}dinger (generalized DNLS) equation \cite{ablowitz04}
\begin{eqnarray*}
 q_{1,t_2}=\triangle^2 q_1+2q_1^2r_1,\\
r_{1,t_2}=-{\triangle^*}^2 r_1+2q_1r_1^2,
\end{eqnarray*}
can be generated from the $t_2$ flows of cdKP hierarchy. Further the
additional symmetry of the cdKP hierarchy is constructed in
ref. \cite{lmh2}, which shows that the constraint in the dKP hierarchy
preserves the symmetry structures with very minor modification.
However the gauge transformation of the cdKP hierarchy  has not
appeared in literatures.

 The purpose of this paper is to construct the gauge transformation of the cdKP hierarchy.
 As we shall show, it is not a trivial task to reduce the gauge transformation of the dKP hierarchy
 to the cdKP hierarchy. For the cdKP hierarchy, the transformed eigenfunctions and the adjoint
 eigenfunctions can not be conserved for the originally form. If the generating functions of
 gauge transformation of the cdKP hierarchy are  selected from the
(adjoint)eigenfunctions, we get the $\Delta$-Wronskian representation of the transformed $\tau$
function.

This paper is organized as follows. Some basic results  of the dKP
hierarchy and the cdKP hierarchy are summarized in Section
\ref{section2}. After introducing of two types gauge transformations
of the cdKP hierarchy, the transformation rules of the eigenfunction
functions and the $\tau$ function of the cdKP hierarchy are obtained
by means of a crucial modification from the transformation of the
dKP hierarchy in Section \ref{section3}. Next the successive
applications of the difference type gauge transformation have be
discussed in Section \ref{section4}. And we establish the
determinant representation of gauge transformation operator $T_m$,
then obtained a general form of the $\tau$-function
$\tau^{(m)}_{\triangle}$ for the cdKP hierarchy. Section
\ref{conclusion} is devoted to conclusions and discussions.

\section{the constrained semi-discrete KP hierarchy}\label{section2}

Let us briefly recall some basic facts
about the semi-discrete KP (cdKP) hierarchy according to reference \cite{Iliev}.
Firstly
a space $F$, namely
\begin{equation}
F=\left\{ f(n)=f(n,t_1,t_2,\cdots,t_j,\cdots);  n\in\mathbb{Z}, t_i\in\mathbb{R}
\right\}
\end{equation}
is defined for the space of the semi-discrete KP hierarchy.
$\Lambda$ and $\triangle$ are denote for the shift operator and the
difference operator, respectively. Their actions on function $f(n)$ are
defined as
\begin{equation}\Lambda f(n)=f(n+1)
\end{equation}
and
\begin{equation}\triangle f(n)=f(n+1)-f(n)=(\Lambda -I)f(n)
\end{equation} respectively,
where $I$ is the identity operator.

For any $j\in\mathbb{Z},$ the Leibniz  rule of $\triangle$ operation is,

\begin{equation}\triangle^j\circ
f=\sum^{\infty}_{i=0}\binom{j}{i}(\triangle^i
f)(n+j-i)\triangle^{j-i},\hspace{.3cm}
\binom{j}{i}=\frac{j(j-1)\cdots(j-i+1)}{i!}.\label{81}
\end{equation}
So an associative ring $F(\triangle)$ of formal pseudo
difference operators is obtained, with the operation $``+"$ and $``\circ"$, namely
$F(\triangle)=\left\{R=\sum_{j=-\infty}^d f_j(n)\triangle^j,
f_j(n)\in R, n\in\mathbb{Z}\right\}
$. The
adjoint operator to the $\triangle$ operator is given by
$\triangle^*$,
\begin{equation}
\triangle^* \circ f(n)=(\Lambda^{-1}-I)f(n)=f(n-1)-f(n),
\end{equation}
where $\Lambda^{-1} f(n)=f(n-1)$, and the corresponding ``$\circ$"
operation is
\begin{equation}
\triangle^{*j}\circ
f=\sum^{\infty}_{i=0}\binom{j}{i}(\triangle^{*i}f)(n+i-j)\triangle^{*j-i}.
\end{equation}
Then the adjoint ring $F(\triangle^*)$ to the
$F(\triangle)$ is obtained, and the formal adjoint to $R\in F(\triangle)$ is defined
by $R^*\in F(\triangle^*)$ as $R^*=\sum_{j=-\infty}^d
\triangle^{*j}\circ f_j(n)$. The $"*"$ operation satisfies the rules as
$(F\circ G)^*=G^*\circ F^*$ for two operators $F$ and $G$ and $f(n)^*=f(n)$ for
a function $f(n)$.

We list some useful properties for the difference operators as following:
\begin{lemma}
For $f\in F$, $\triangle$ and $\Lambda$ as
above, the following identities hold.
\begin{align}
&(1)\quad \triangle\circ\Lambda=\Lambda\circ\triangle,  \\
&(2)\quad \triangle^*=-\triangle\circ\Lambda^{-1},  \\
&(3)\quad
(\triangle^{-1})^*=(\triangle^*)^{-1}=-\Lambda\circ\triangle^{-1}, \\
&(4)\quad \triangle^{-1}\circ
f\circ\triangle^{-1}=(\triangle^{-1}f)\circ\triangle^{-1}-\triangle^{-1}\circ
\Lambda(\triangle^{-1} f),\label{85}\\
&(5)\quad\triangle\circ f(n)=\Lambda(f(n))\circ\triangle + \triangle(f(n)).\label{gs2}
\end{align}
\end{lemma}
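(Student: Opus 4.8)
The plan is to verify each of the five identities by working directly from the definitions of $\triangle$, $\triangle^*$, $\Lambda$, and $\Lambda^{-1}$ as operators acting on an arbitrary function $f\in F$, and to use the Leibniz rule \eqref{81} together with basic commutation properties of the shift operator.

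\medskip

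\textbf{Parts (1), (2), (3).} For (1), apply both sides to $f(n)$ and note $\triangle\Lambda f(n)=f(n+2)-f(n+1)=\Lambda\triangle f(n)$, so $\triangle\circ\Lambda=\Lambda\circ\triangle$; equivalently $\Lambda$ and $\triangle=\Lambda-I$ commute because both are constant-coefficient polynomials in $\Lambda$. For (2), compute $(-\triangle\circ\Lambda^{-1})f(n)=-\triangle f(n-1)=-(f(n)-f(n-1))=f(n-1)-f(n)=\triangle^*f(n)$, which is exactly the definition of $\triangle^*$. For (3), I would first check $(\triangle^*)^{-1}=-\Lambda\circ\triangle^{-1}$: using (2), $\triangle^*\circ(-\Lambda\circ\triangle^{-1})=(\triangle\circ\Lambda^{-1})\circ(\Lambda\circ\triangle^{-1})=\triangle\circ\triangle^{-1}=I$, since $\Lambda^{-1}$ and $\Lambda$ cancel and $\Lambda$ commutes with $\triangle^{-1}$ by (1). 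The identity $(\triangle^{-1})^*=(\triangle^*)^{-1}$ follows from the general rule $(F\circ G)^*=G^*\circ F^*$ applied to $\triangle\circ\triangle^{-1}=I$, giving $(\triangle^{-1})^*\circ\triangle^*=I$.

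\medskip

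\textbf{Part (5).} This is the operator Leibniz identity and is really the $j=1$ case of \eqref{81}: $\triangle\circ f(n)=\binom{1}{0}(\triangle^0 f)(n+1)\triangle^1+\binom{1}{1}(\triangle^1 f)(n)\triangle^0=f(n+1)\triangle+\triangle f(n)=\Lambda(f(n))\circ\triangle+\triangle(f(n))$. Alternatively, verify directly on a test function $g$: $\triangle(fg)(n)=f(n+1)g(n+1)-f(n)g(n)=f(n+1)(g(n+1)-g(n))+(f(n+1)-f(n))g(n)$, which is precisely $\Lambda(f)\triangle g+\triangle(f)\,g$.

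\medskip

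\textbf{Part (4).} This is the main obstacle — the only identity not immediate from a one-line definition chase, since it involves $\triangle^{-1}$ on both sides. The strategy is to rewrite everything so that $\triangle^{-1}$ appears only on the extreme right. Starting from (5) with a function $h$ in place of $f$, namely $\triangle\circ h=\Lambda(h)\circ\triangle+\triangle(h)$, I multiply on the left and right by $\triangle^{-1}$ to get $h\circ\triangle^{-1}=\triangle^{-1}\circ\Lambda(h)+\triangle^{-1}\circ\triangle(h)\circ\triangle^{-1}$. Rearranging and relabeling ($h=\triangle^{-1}f$, so $\triangle(h)=f$ and $\Lambda(h)=\Lambda(\triangle^{-1}f)$) yields $\triangle^{-1}\circ f\circ\triangle^{-1}=(\triangle^{-1}f)\circ\triangle^{-1}-\triangle^{-1}\circ\Lambda(\triangle^{-1}f)$, which is \eqref{85}. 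Care must be taken with the order of composition and with the fact that $\triangle^{-1}f$ denotes the function obtained by applying $\triangle^{-1}$ to $f$, as opposed to the operator $\triangle^{-1}\circ f$; keeping that notational distinction straight is where the bookkeeping lies, but no deeper idea is needed.
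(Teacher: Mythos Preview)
Your proof is correct in all five parts; the derivation of (4) from (5) via the substitution $h=\triangle^{-1}f$ is clean and valid. The paper itself states this lemma without proof (it is presented as a list of standard operator identities), so there is no original argument to compare against.
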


 The so-called $1$-constrained semi-discrete KP (cdKP) hierarchy \cite{lmh2} is defined by the following Lax equation
\begin{equation}\label{1cdKPlaxeq}
\frac{\partial L}{\partial t_l} = [(L^l)_+,L], l=1,2,\cdots,
\end{equation}
associated with a special Lax operator
\begin{equation} \label{laxofcdkp}
L = L_{+}+ \sum_{i=1}^{m}q_i(t)\triangle^{-1}r_i(t)=\triangle + \sum_{i=1}^{m}q_i(t)\triangle^{-1}r_i(t),
\end{equation}
 and  $q_i(t)$ is an eigenfunction, $r_i(t)$ is an adjoint
 eigenfunction of the Lax operator $L$. The eigenfunction and adjoint eigenfunction {$q_i(t),r_i(t)$} are
important dynamical variables in the cdKP hierarchy.
One can check that the
 Lax equation (\ref{1cdKPlaxeq}) is consistent with the evolution
 equations of the eigenfunction(or adjoint eigenfunction)
\begin{eqnarray}\label{eigenfunction}
\begin{cases}
q_{i,t_m} = B_mq_i,\\
r_{i,t_m} = -B_m^*r_i,\quad  B_m=(L^m)_{+}, \forall m \in N.
\end{cases}
\end{eqnarray}
Therefore  the cdKP hierarchy in eq.(\ref{1cdKPlaxeq}) is well defined.
 From the Lax equation (\ref{1cdKPlaxeq}), we get the first nontrival $t_2$ flow equations of the cdKP hierarchy for $m=1,l=2$ as
\begin{eqnarray}
\begin{cases}
q_{1,t_2}=\triangle^2 q_1+2q_1^2r_1=q_1(n+2)-2q_1(n+1)+q_1(n)+2q_1^2r_1,\\
r_{1,t_2}=-{\triangle^*}^2 r_1+2q_1r_1^2=r_1(n)-2r_1(n-1)+r_1(n-2)+2q_1(n)r_1(n)^2
\end{cases}
\end{eqnarray}
And it is so called the generalized discrete non-linear Schr\"{o}dinger (generalized DNLS) equation \cite{ablowitz04}. It can be reduced to the discrete non-linear Schr\"{o}dinger (DNLS) equation \cite{ablowitz04} by letting $r_1=q_1^*$ and a scaling transformation $t_2=it_2$.

\section{Gauge transformations of  the constrained semi-discrete KP hierarchy}\label{section3}
We will discuss the gauge transformations of the constrained semi-discrete KP hierarchy in
this section. It is reported two types of gauge transformation operators  for the semi-discrete KP hierarchy in \cite{LiuS}. We will extended the gauge transformation to the constrained semi-discrete  KP hierarchy. If there exist a pseudo-difference operator $T$  satisfying
\begin{equation}\label{TLT-1}
L^{(1)}=T\circ L \circ T^{-1}, B_n^{(1)}=(L^{(1)})_+^n,
\end{equation}
so that
\begin{equation}
\frac{\partial L}{\partial t_l} = [(L^l)_+,L]\nonumber
\end{equation}
holds for the transformed Lax operator $L^{(1)}$, i.e.,
\begin{equation}\label{L1}
\frac{\partial L^{(1)}}{\partial t_l} = [(L^{(1)})^{l}_+,L^{(1)}];
\end{equation} then $T$ is called a gauge transformation operator of the cdKP hierarchy.
According to the definition of gauge transformation, we have the following criterion lemma.
\begin{lemma}\label{td}
The operator $T$ is a gauge transformation operator, if
\begin{equation}\label{td1}
(T\circ B_n \circ T^{-1})_+=T\circ B_n \circ T^{-1} +\frac{\partial T}{\partial t_n}\circ T^{-1},
\end{equation}
or
\begin{equation}\label{td2}
(T\circ B_n \circ T^{-1})_-=-\frac{\partial T}{\partial t_n}\circ T^{-1}.
\end{equation}
\end{lemma}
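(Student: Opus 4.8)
The plan is to start from the transformed Lax equation we must verify, namely \eqref{L1}, and express everything in terms of $T$, $L$, and $B_n = (L^n)_+$. Since conjugation by $T$ is an algebra automorphism of $F(\triangle)$, we have $(L^{(1)})^n = T\circ L^n\circ T^{-1}$, so $(L^{(1)})^n$ splits as $T\circ L^n\circ T^{-1} = (T\circ B_n\circ T^{-1}) + (T\circ (L^n)_-\circ T^{-1})$; the point is to identify which part of this sum is the differential part $B_n^{(1)} = (L^{(1)})^n_+$. First I would compute $\partial_{t_n} L^{(1)}$ directly by the product rule: from $L^{(1)} = T\circ L\circ T^{-1}$ one gets
\begin{equation}
\frac{\partial L^{(1)}}{\partial t_n} = \frac{\partial T}{\partial t_n}\circ T^{-1}\circ L^{(1)} - L^{(1)}\circ \frac{\partial T}{\partial t_n}\circ T^{-1} + T\circ\frac{\partial L}{\partial t_n}\circ T^{-1},
\end{equation}
and then substitute the original Lax equation $\partial_{t_n} L = [B_n, L]$ together with $B_n = T^{-1}\circ(T\circ B_n\circ T^{-1})\circ T$ to rewrite $T\circ[B_n,L]\circ T^{-1} = [T\circ B_n\circ T^{-1}, L^{(1)}]$. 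Collecting terms, this yields
\begin{equation}
\frac{\partial L^{(1)}}{\partial t_n} = \Big[\,T\circ B_n\circ T^{-1} + \frac{\partial T}{\partial t_n}\circ T^{-1},\; L^{(1)}\,\Big].
\end{equation}

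Next I would impose hypothesis \eqref{td1}. That hypothesis says precisely that the operator $M_n := T\circ B_n\circ T^{-1} + \partial_{t_n}T\circ T^{-1}$ equals its own differential (projection-$+$) part, i.e. $M_n = (M_n)_+$; moreover, since $T\circ B_n\circ T^{-1} = T\circ L^n\circ T^{-1} - T\circ (L^n)_-\circ T^{-1} = (L^{(1)})^n - T\circ(L^n)_-\circ T^{-1}$ and $\partial_{t_n}T\circ T^{-1}$, one checks by the same token that $M_n$ differs from $(L^{(1)})^n$ by an operator of nonpositive order, so $(M_n)_+ = (L^{(1)})^n_+ = B_n^{(1)}$. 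Hence $M_n = B_n^{(1)}$, and substituting into the bracket gives exactly $\partial_{t_n}L^{(1)} = [B_n^{(1)}, L^{(1)}] = [(L^{(1)})^n_+, L^{(1)}]$, which is \eqref{L1}. For the equivalent condition \eqref{td2}, I would simply note that writing $T\circ B_n\circ T^{-1} = (L^{(1)})^n - T\circ(L^n)_-\circ T^{-1}$ and taking the $(-)$-projection of both sides of \eqref{td1}—using that $((L^{(1)})^n)_- $ and $(T\circ B_n\circ T^{-1})_-$ are the relevant pieces—shows \eqref{td1} and \eqref{td2} are the same identity split by the $(+)/(-)$ decomposition; so it suffices to prove one of them implies \eqref{L1}.

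The only genuinely delicate point is the order bookkeeping: one must check that $\partial_{t_n}T\circ T^{-1}$ and $T\circ(L^n)_-\circ T^{-1}$ are both pseudo-difference operators of order $\le -1$ (equivalently, purely in the $(-)$-part), so that the projection identity $(M_n)_+ = B_n^{(1)}$ is legitimate. This follows because a gauge operator $T$ has the form $T = (\text{finite differential part})$ or its inverse-type analogue with $T\circ T^{-1}=1$, so $\partial_{t_n}T\circ T^{-1}$ has order strictly less than that of $T$ relative to $T$, i.e. order $\le -1$; and $(L^n)_-$ has order $\le -1$ by definition, a property preserved under conjugation in $F(\triangle)$. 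Once this is in hand, the rest is the formal manipulation above, valid verbatim in the pseudo-difference setting because the Leibniz rule \eqref{81} makes $F(\triangle)$ an associative ring exactly as in the continuous KP case. I would present the computation in the $(+)$-form using \eqref{td1}, then remark that \eqref{td2} is obtained by applying the projection $(\ )_-$ to \eqref{td1} and using $( (L^{(1)})^n )_- = (L^{(1)})^n - (L^{(1)})^n_+ $.
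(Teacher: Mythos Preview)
The paper does not actually give a proof of this lemma; it is simply stated as a criterion following the definition of gauge transformation. Your proposal therefore supplies an argument the paper omits, and the overall strategy is the standard (and correct) one: derive $\partial_{t_n}L^{(1)}=[M_n,L^{(1)}]$ with $M_n=T\circ B_n\circ T^{-1}+\partial_{t_n}T\circ T^{-1}$ by the product rule and the original Lax equation, identify $M_n$ with $B_n^{(1)}$ under hypothesis \eqref{td1}, and note that \eqref{td1} and \eqref{td2} are equivalent via the splitting $T\circ B_n\circ T^{-1}=(T\circ B_n\circ T^{-1})_++(T\circ B_n\circ T^{-1})_-$.

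There is, however, a slip in your order bookkeeping. Your claim that $\partial_{t_n}T\circ T^{-1}$ has order $\le -1$ is false for a general invertible pseudo-difference operator: if the leading coefficient of $T$ depends on $t_n$, then $\partial_{t_n}T\circ T^{-1}$ has order $0$. (It happens to hold for the specific $T_d,T_i$ in \eqref{gfofcdkp1}--\eqref{gfofcdkp2} because their leading coefficient is $1$.) Fortunately this claim is unnecessary. Once \eqref{td1} gives $M_n=(T\circ B_n\circ T^{-1})_+$, compare directly with $B_n^{(1)}=(T\circ L^n\circ T^{-1})_+$: their difference is $(T\circ(L^n)_-\circ T^{-1})_+$, and this vanishes because conjugation by any invertible $T$ of well-defined top order preserves order in $F(\triangle)$, so $T\circ(L^n)_-\circ T^{-1}$ has order $\le -1$. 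Hence $M_n=B_n^{(1)}$ without any hypothesis on $\partial_{t_n}T\circ T^{-1}$. Relatedly, your phrase ``nonpositive order'' is too weak to conclude $(M_n)_+=((L^{(1)})^n)_+$; you need strictly negative order, which the corrected route provides.
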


Similar to the KP hierarchy and the cKP hierarchy, there  are two types of gauge transformation
operators of the cdKP hierarchy  as the following lemma:
\begin{lemma}
\cite{chau_cmp1992,oe}The cdKP hierarchy have two types  gauge transformation operators, namely,
\begin{eqnarray}
(1).T_d(q)&=&\Lambda(q)\circ\triangle\circ q^{-1},\label{gfofcdkp1}\\
(2).T_i(r)&=&\Lambda^{-1}(r^{-1})\circ\triangle^{-1}\circ r.\label{gfofcdkp2}
\end{eqnarray}
Where $q$ and $r$ are defined by (\ref{eigenfunction}) that are the (adjoint) eigenfunction of $L$ in (\ref{laxofcdkp}), which is called the generating functions of gauge transformation.
\end{lemma}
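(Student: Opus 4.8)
The plan is to verify the criterion of Lemma~\ref{td} for each of the two candidate operators, which reduces everything to an explicit computation of the negative-difference-order part of $T\circ B_n\circ T^{-1}$. First I would treat the differential type $T_d(q)=\Lambda(q)\circ\triangle\circ q^{-1}$. Since $q$ is an eigenfunction, $q_{t_n}=B_n q$; applying $T_d(q)$ to the flow relation $\partial_{t_n}L=[B_n,L]$ and using the product rule for $\partial_{t_n}$ on $T_d(q)$, one computes $\partial_{t_n}T_d(q)\circ T_d(q)^{-1}$ directly. The key identity to establish is that $T_d(q)\circ B_n\circ T_d(q)^{-1}+\partial_{t_n}T_d(q)\circ T_d(q)^{-1}$ has no negative part, equivalently that its negative part equals $-\partial_{t_n}T_d(q)\circ T_d(q)^{-1}$ as in \eqref{td2}. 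Here I would use Lemma~1, especially \eqref{gs2} and the commutation $\triangle\circ\Lambda=\Lambda\circ\triangle$, to push the $\triangle$ and $\Lambda$ factors through, and the fact that conjugation by $\triangle$ of a purely differential operator stays differential up to a term whose negative part is controlled by $(B_n q)$, i.e. by $q_{t_n}$ — this is precisely where the eigenfunction condition enters and cancels the unwanted tail.

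For the integral type $T_i(r)=\Lambda^{-1}(r^{-1})\circ\triangle^{-1}\circ r$, the computation is dual: one uses $r_{t_n}=-B_n^* r$ together with the adjoint Leibniz rule and identities (2)--(4) of Lemma~1, particularly $\triangle^{-1}\circ f\circ\triangle^{-1}=(\triangle^{-1}f)\circ\triangle^{-1}-\triangle^{-1}\circ\Lambda(\triangle^{-1}f)$ from \eqref{85}, to expand $T_i(r)\circ B_n\circ T_i(r)^{-1}$. The natural route is to take formal adjoints: since $T_i(r)^*$ is (up to a sign and shift) of the same shape as $T_d$ with $r$ playing the role of $q$ for the adjoint hierarchy generated by $B_n^*$, one can transport the first computation. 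Concretely, I would show $\big(T_i(r)\circ B_n\circ T_i(r)^{-1}\big)_- = -\,\partial_{t_n}T_i(r)\circ T_i(r)^{-1}$ by adjointing the identity already proved for $T_d$, being careful that $(\;)_+$ and $(\;)_-$ swap appropriately under $*$ and that the sign in $r_{t_n}=-B_n^*r$ matches the sign flip produced by the adjoint.

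There is also the constraint-preservation point flagged in the introduction: one should check that $L^{(1)}=T\circ L\circ T^{-1}$ is again of the constrained form $\triangle+\sum q_i^{(1)}\triangle^{-1}r_i^{(1)}$, i.e. that the purely-difference part remains exactly $\triangle$ and the tail remains a finite sum of the rank-one pieces $q\triangle^{-1}r$. For $T_d(q)$ this follows from $T_d(q)\circ\triangle\circ T_d(q)^{-1}$ having leading term $\triangle$ and from the conjugates $T_d(q)\circ(q_i\triangle^{-1}r_i)\circ T_d(q)^{-1}$ collapsing, via \eqref{85} and $T_d(q)\,q=0$-type relations, into new rank-one terms plus possibly an extra term absorbed by the $i$ with $q_i$ aligned to $q$; the analogous statement holds for $T_i(r)$. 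The main obstacle I expect is bookkeeping the shift operators $\Lambda^{\pm1}$ correctly when moving scalar functions past $\triangle^{-1}$: in the continuous case $T_d(q)=q\circ\partial\circ q^{-1}$ conjugation is clean, but here the $\Lambda(q)$ prefactor and the mismatched arguments $f(n+j-i)$ in the Leibniz rule \eqref{81} generate many shifted terms that must be shown to recombine, and getting every $\Lambda^{\pm1}$ in the right place — rather than any conceptual difficulty — is where the real work lies.
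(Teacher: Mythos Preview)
The paper does not prove this lemma at all: it is stated with citations to \cite{chau_cmp1992,oe} and taken as known from the literature on the (unconstrained) discrete KP hierarchy. Your plan to verify the criterion \eqref{td2} of Lemma~\ref{td} directly for $T_d(q)$ using $q_{t_n}=B_nq$, and then to obtain the $T_i(r)$ case by formal adjunction, is exactly the standard argument one would expect in those references, so your proposal is correct but simply more than the paper itself supplies.

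One clarification on scope: the constraint-preservation you sketch in your last paragraph is \emph{not} part of this lemma in the paper's organization. That $L^{(1)}$ again has the form $\triangle+\sum q_i^{(1)}\triangle^{-1}r_i^{(1)}$ is the content of the subsequent Theorems~3.3 and~3.4, where the negative part $L^{(1)}_-$ is computed explicitly via \eqref{85}. Moreover, as the Remark after Theorem~3.4 emphasizes, the constrained form is \emph{not} automatically preserved: one picks up an extra rank-one term $q_0^{(1)}\triangle^{-1}r_0^{(1)}$, and the number of summands stays fixed only if the generating function $q$ (resp.\ $r$) is chosen to be one of the $q_i$ (resp.\ $r_i$) already appearing in $L$, so that $T_d(q_i)q_i=0$ (resp.\ $(T_i^{-1}(r_i))^*r_i=0$) kills one term. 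Your phrase ``an extra term absorbed by the $i$ with $q_i$ aligned to $q$'' is the right intuition, but be aware that in the paper this is a separate theorem plus remark, not part of the present lemma.
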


Via the gauge transformations of two types, $L^{(0)}=L$ becomes $L^{(1)}$ by the following lemma.
\begin{theorem}\label{3.3}
Under the gauge transformation of $T_d(q)$, the transformed Lax operator reads as
\begin{eqnarray}
L^{(1)}&=& L^{(1)}_++L^{(1)}_-,\\
L^{(1)}_+&=&\Lambda(L^{(0)}_+)+\Lambda(q)\circ\triangle(q^{-1}L^{(0)}_+q)_{\geq1}\circ \triangle^{-1}\circ \Lambda(q^{-1}),\\
L^{(1)}_-&=&q_0^{(1)}\triangle^{-1}r_0^{(1)}+\sum_{i=1}^{m}q_i^{(1)}\triangle^{-1}r_i^{(1)},\\
q_0^{(1)}&=&T_d(q)(L^{(0)})(q),r_0^{(1)}=\Lambda(q^{-1}),\\
q_i^{(1)}&=&T_d(q)q_i^{(0)},r_i^{(1)}=(T_d^{-1})^{*}(q)(r_i^{(0)}).
\end{eqnarray}
\end{theorem}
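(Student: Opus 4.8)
The plan is to verify directly that $T_d(q) = \Lambda(q)\circ\triangle\circ q^{-1}$ conjugates $L^{(0)}$ into an operator of the cdKP form, and to track where the various pieces land. First I would compute $L^{(1)} = T_d(q)\circ L^{(0)}\circ T_d(q)^{-1}$ by splitting $L^{(0)} = L^{(0)}_+ + L^{(0)}_-$ with $L^{(0)}_+ = \triangle$ and $L^{(0)}_- = \sum_{i=1}^m q_i\triangle^{-1}r_i$. For the differential part, I would write $T_d(q)\circ L^{(0)}_+\circ T_d(q)^{-1} = \Lambda(q)\circ\triangle\circ(q^{-1}L^{(0)}_+q)\circ\triangle^{-1}\circ\Lambda(q^{-1})$, then use the standard decomposition $q^{-1}L^{(0)}_+q = (q^{-1}L^{(0)}_+q)_{\geq 1} + (q^{-1}L^{(0)}_+q)_0 + (q^{-1}L^{(0)}_+q)_{\leq -1}$; conjugating the middle term by $\triangle$ and $\triangle^{-1}$ just shifts it (using Lemma part (1), $\triangle\circ\Lambda = \Lambda\circ\triangle$, and part (5), $\triangle\circ f = \Lambda(f)\circ\triangle + \triangle(f)$), producing the $\Lambda(L^{(0)}_+)$ term plus the displayed $(\cdot)_{\geq 1}$ contribution, while the $(\cdot)_{\leq -1}$ part contributes to $L^{(1)}_-$. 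The key identity here is that $T_d(q)(q) = \Lambda(q)\circ\triangle\circ q^{-1}(q) = \Lambda(q)\cdot\triangle(1) = 0$, which is what makes $q$ the right generating function: it kills the $q_0\triangle^{-1}r_0$-type obstruction at the level of the symbol and forces the surviving negative part into the constrained shape.

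Next I would handle $L^{(0)}_- = \sum_i q_i\triangle^{-1}r_i$. Conjugation gives $T_d(q)\circ q_i\triangle^{-1}r_i\circ T_d(q)^{-1} = \big(T_d(q)q_i\big)\cdot\triangle^{-1}\cdot\big((T_d^{-1})^*(q)r_i\big)$ once one checks the elementary fact that for any eigenfunction $a$ and adjoint eigenfunction $b$, $T_d\circ(a\triangle^{-1}b)\circ T_d^{-1} = (T_d a)\triangle^{-1}\big((T_d^{-1})^* b\big)$; this follows from $(T\circ a\triangle^{-1}b\circ T^{-1}) = (Ta)\cdot\triangle^{-1}\cdot(\text{something})$ together with the general principle that conjugating $\triangle^{-1}$ by a first-order operator and its inverse returns $\triangle^{-1}$ up to the left/right dressing by the (adjoint) image of the generating pair — precisely the content that defines $q_i^{(1)} = T_d(q)q_i^{(0)}$ and $r_i^{(1)} = (T_d^{-1})^*(q)(r_i^{(0)})$. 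One must also verify that $q_i^{(1)}$ is genuinely an eigenfunction and $r_i^{(1)}$ an adjoint eigenfunction of $L^{(1)}$, which reduces to checking $q_{i,t_n}^{(1)} = B_n^{(1)}q_i^{(1)}$; this is where Lemma \ref{td} enters, since $\partial_{t_n}T_d = (T_d B_n T_d^{-1})_- T_d$ is exactly what is needed to commute the time derivative past $T_d$ and land on $B_n^{(1)} = (L^{(1)})_+^n$.

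The bookkeeping step is then to collect all negative-order contributions: the $(q^{-1}L^{(0)}_+q)_{\leq -1}$ piece from the differential part, plus the dressed $\sum_i q_i^{(1)}\triangle^{-1}r_i^{(1)}$, and to show the leftover term beyond $\sum_{i=1}^m q_i^{(1)}\triangle^{-1}r_i^{(1)}$ assembles into a single $q_0^{(1)}\triangle^{-1}r_0^{(1)}$ with $q_0^{(1)} = T_d(q)(L^{(0)})(q)$ and $r_0^{(1)} = \Lambda(q^{-1})$. The formula $r_0^{(1)} = \Lambda(q^{-1})$ is forced by $(T_d^{-1})^* = -\triangle^{-1}\circ\Lambda(q^{-1})\circ\cdots$ read off from Lemma parts (2)–(3), and $q_0^{(1)}$ emerges from applying $T_d(q)$ to $L^{(0)}q$ after noting $T_d(q)q = 0$ removes the spurious $\triangle$-part; concretely $L^{(1)} - \Lambda(L^{(0)}) - (\text{already-accounted terms})$ must have residue $q_0^{(1)}$ and left-null-eigenfunction $r_0^{(1)}$, which one checks by acting on $\Lambda(q^{-1})$ on the right.

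The main obstacle I anticipate is the discrete-calculus care in the positive-part computation: unlike the continuous KP case, $\triangle$ does not commute with multiplication by functions, so every conjugation $\triangle\circ f\circ\triangle^{-1}$ spawns shift operators $\Lambda^{\pm 1}$ via Lemma part (5), and one must be scrupulous about which argument ($n$, $n+1$, $n-1$) each function is evaluated at — this is exactly the ``crucial modification from the dKP hierarchy'' the introduction warns about, and it is why the answer features $\Lambda(q)$, $\Lambda(q^{-1})$ rather than bare $q$, $q^{-1}$. A secondary but routine point is confirming that the whole construction is $t_l$-consistent, i.e. that $L^{(1)}$ so produced actually satisfies \eqref{L1}; given Lemma \ref{td} and the verified transformation rules for $q_i, r_i$, this is essentially automatic, so I would state it and move on rather than belabor it.
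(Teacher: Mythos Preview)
The overall strategy—conjugate, split into $\pm$ parts, track the pieces—matches the paper, but there is a genuine gap in your handling of $L^{(0)}_-$. You assert as an ``elementary fact'' that
\[
T_d(q)\circ(a\,\triangle^{-1}b)\circ T_d(q)^{-1} \;=\; (T_d(q)\,a)\,\triangle^{-1}\,\bigl((T_d^{-1})^*(q)\,b\bigr),
\]
but this is false: conjugating a rank-one negative term by $T_d$ produces \emph{two} rank-one terms, not one. The paper's computation hinges on the identity
\[
\triangle^{-1}\circ f\circ\triangle^{-1} \;=\; (\triangle^{-1}f)\circ\triangle^{-1} \;-\; \triangle^{-1}\circ\Lambda(\triangle^{-1}f),
\]
applied to the inner $\triangle^{-1}\circ(r_i q)\circ\triangle^{-1}$ inside $T_d\circ q_i\triangle^{-1}r_i\circ T_d^{-1}$. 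One of the two resulting terms is indeed $(T_d q_i)\,\triangle^{-1}\,((T_d^{-1})^*r_i)$; the other is $T_d(q)\bigl(q_i\triangle^{-1}r_i\bigr)(q)\cdot\triangle^{-1}\cdot\Lambda(q^{-1})$. It is precisely this second piece—summed over $i$ and added to the analogous contribution $T_d(q)\,L^{(0)}_+(q)\cdot\triangle^{-1}\cdot\Lambda(q^{-1})$ coming from the positive part—that assembles into $q_0^{(1)}\triangle^{-1}r_0^{(1)} = T_d(q)(L^{(0)})(q)\cdot\triangle^{-1}\cdot\Lambda(q^{-1})$.

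Your bookkeeping also misattributes the origin of the extra term: you locate it in $(q^{-1}L^{(0)}_+q)_{\leq -1}$, but for $L^{(0)}_+=\triangle$ that operator has only order-$1$ and order-$0$ parts, so $(\cdot)_{\leq -1}=0$. The negative contribution of $T_d\,L^{(0)}_+\,T_d^{-1}$ actually arises from the \emph{zeroth-order} coefficient of $q^{-1}L^{(0)}_+q$, via $\triangle\circ b\circ\triangle^{-1}=\Lambda(b)+(\triangle b)\,\triangle^{-1}$, and equals $(T_d\,L^{(0)}_+\,q)_0\cdot\triangle^{-1}\cdot\Lambda(q^{-1})=T_d(q)L^{(0)}_+(q)\cdot\triangle^{-1}\cdot\Lambda(q^{-1})$. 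Without the $\triangle^{-1}\circ f\circ\triangle^{-1}$ identity you cannot close the computation; your use of $\triangle\circ f=\Lambda(f)\triangle+\triangle f$ for the positive part and the observation $T_d(q)(q)=0$ are fine, but they are not the hard step here.
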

\begin{proof}
\begin{eqnarray*}
L^{(1)}_+&=&(T_d(q)\circ L^{(0)}\circ T^{-1}_d(q))_+\\
&=&(\Lambda(q)\circ\triangle\circ q^{-1}\circ L^{(0)}_{+}\circ q\circ \triangle^{-1} \circ\Lambda(q^{-1}))_+\\
&=&(\Lambda(q)\circ\Lambda( q^{-1}L^{(0)}_{+}q)\circ\Lambda(q^{-1}))_++(\Lambda(q)\triangle(q^{-1} L^{(0)}_{+}q)\circ \triangle^{-1} \circ\Lambda(q^{-1}))_+\\
&=&\Lambda(L^{(0)}_{+})+\Lambda(q)\triangle(q^{-1} L^{(0)}_{+}q)_{\geq1}\triangle^{-1} \circ\Lambda(q^{-1}),
\end{eqnarray*}
where used the identity (\ref{gs2}).
\begin{eqnarray*}
L^{(1)}_-&=&(T_d(q)\circ L^{(0)}\circ T^{-1}_d(q))_-\\
&=&(\Lambda(q)\triangle\circ q^{-1}L^{(0)}_{+}\circ q\circ \triangle^{-1} \circ\Lambda(q^{-1}))_-\\
&&+(\Lambda(q)\triangle\circ q^{-1}\circ\sum_{i=1}^{m}q_i^{(0)}\triangle^{-1}r_i^{(0)}\circ q\circ \triangle^{-1} \circ\Lambda(q^{-1}))_-.
\end{eqnarray*}
Where
\begin{eqnarray*}
&&(\Lambda(q)\triangle\circ q^{-1}L^{(0)}_{+}\circ q\circ \triangle^{-1} \circ\Lambda(q^{-1}))_-\\
&&=(T_d(q)L^{(0)}_{+}\circ q)_0\circ \triangle^{-1} \circ\Lambda(q^{-1})\\
&&=T_d(q)L^{(0)}_{+}(q)\circ \triangle^{-1} \circ\Lambda(q^{-1}),
\end{eqnarray*}
and
\begin{eqnarray*}
&&(\Lambda(q)\triangle\circ q^{-1}\circ\sum_{i=1}^{m}q_i^{(0)}\triangle^{-1}r_i^{(0)}\circ q\circ \triangle^{-1}\circ\Lambda(q^{-1}))_-\\
&&\stackrel{(\ref{85})}{==}\sum_{i=1}^{m}T_d(q)\circ q_i^{(0)}\cdot\triangle^{-1}(r_i^{(0)}q) \triangle^{-1}\circ\Lambda(q^{-1})\\
&&-\sum_{i=1}^{m}T_d(q)\circ q_i^{(0)}\triangle^{-1}\circ\Lambda(\triangle^{-1}(r_i^{(0)}q))\cdot\Lambda(q^{-1}))\\
&&=\sum_{i=1}^{m}T_d(q)(q_i^{(0)}\triangle^{-1}r_i^{(0)})(q)\triangle^{-1}\circ\Lambda(q^{-1})+\sum_{i=1}^{m}T_d(q)( q_i^{(0)})\triangle^{-1}\circ(\triangle^{-1})^*(r_i^{(0)}q)\cdot\Lambda(q^{-1})).
\end{eqnarray*}
When the above  two formulas are substituted into $L^{(1)}_-$, then
\begin{eqnarray*}
L^{(1)}_-&=&T_d(q)L^{(0)}_{+}(q)\circ \triangle^{-1} \circ\Lambda(q^{-1})+\sum_{i=1}^{m}T_d(q)(q_i^{(0)}\triangle^{-1}r_i^{(0)})(q)\triangle^{-1}\circ\Lambda(q^{-1})\\
&&+\sum_{i=1}^{m}T_d(q)( q_i^{(0)})\triangle^{-1}\circ(\triangle^{-1})^*(r_i^{(0)}q)\cdot\Lambda(q^{-1}))\\
&=&T_d(q)(L^{(0)}_{+}+\sum_{i=1}^{m}q_i^{(0)}\triangle^{-1}r_i^{(0)})(q)\circ \triangle^{-1} \circ\Lambda(q^{-1})\\
&&+\sum_{i=1}^{m}T_d(q)( q_i^{(0)})\triangle^{-1}\circ\Lambda(q^{-1})(\triangle^{-1})^*(q r_i^{(0)})\\
&=&T_d(q)L^{(0)}(q)\circ \triangle^{-1} \circ\Lambda(q^{-1})+\sum_{i=1}^{m}T_d(q)( q_i^{(0)})\triangle^{-1}\circ (T_d^{-1}(q))^*(r_i^{(0)}).
\end{eqnarray*}

If let $q_0^{(1)}=T_d(q)(L^{(0)})(q)$, $r_0^{(1)}=\Lambda(q^{-1})$, $q_i^{(1)}=T_d(q)q_i^{(0)}$, $r_i^{(1)}=(T_d^{-1})^{*}(q)(r_i^{(0)})$, we can get this theorem.
\end{proof}

\begin{theorem}
Under the type $2$ gauge transformation $T_i(r)$, the transformed Lax operator reads
\begin{eqnarray}
L^{(1)}&=& L^{(1)}_++L^{(1)}_-,\\
L^{(1)}_+&=&\Lambda^{-1}(L^{(0)}_+)-\Lambda^{-1}(r^{-1})\triangle^{-1}\circ\triangle^{*}(r\circ L^{(0)}_+\circ r^{-1})_{\geq1}\circ \Lambda(r^{-1}),\\
L^{(1)}_-&=&q_0^{(1)}\triangle^{-1}r_0^{(1)}+\sum_{i=1}^{m}q_i^{(1)}\triangle^{-1}r_i^{(1)},\\
q_0^{(1)}&=&\Lambda^{-1}(r^{-1}),
r_0^{(1)}=(T_i^{-1}(r))^{*}(L^{(0)})^*(r),\\
q_i^{(1)}&=&T_i(r)(q_i^{(0)}),
r_i^{(1)}=(T_i^{-1}(r))^{*}(r_i^{(0)}).
\end{eqnarray}
\end{theorem}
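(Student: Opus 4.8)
The plan is to mimic the proof of Theorem \ref{3.3} step for step, with the differential building block $\triangle$ replaced by $\triangle^{-1}$ and with the adjoint rules of the Lemma — $\triangle^{*}=-\triangle\circ\Lambda^{-1}$, $(\triangle^{-1})^{*}=(\triangle^{*})^{-1}=-\Lambda\circ\triangle^{-1}$ and $(F\circ G)^{*}=G^{*}\circ F^{*}$ — taking the place of (\ref{gs2}); identity (\ref{85}) is used exactly as before. As preliminaries I would record $T_i^{-1}(r)=r^{-1}\circ\triangle\circ\Lambda^{-1}(r)$ (so that $T_i(r)\circ T_i^{-1}(r)=I$ reduces to $\triangle^{-1}\circ\triangle=I$ and $\Lambda^{-1}(r^{-1})\Lambda^{-1}(r)=1$), $(T_i^{-1}(r))^{*}=\Lambda^{-1}(r)\circ\triangle^{*}\circ r^{-1}$, and $(T_i^{-1}(r))^{*}(r)=\Lambda^{-1}(r)\,\triangle^{*}(1)=0$ — the counterpart of $T_d(q)(q)=0$, which is what makes $r$ a legitimate generating function via Lemma \ref{td}. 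Then, splitting $L^{(0)}=L^{(0)}_{+}+\sum_{i=1}^{m}q_i^{(0)}\triangle^{-1}r_i^{(0)}$, I would conjugate the two summands by $T_i(r)$ separately and read off the $(\,\cdot\,)_{+}$ and $(\,\cdot\,)_{-}$ parts.

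For the differential summand, set $P:=r\circ L^{(0)}_{+}\circ r^{-1}$, a difference operator of non-negative order, so that $T_i(r)\circ L^{(0)}_{+}\circ T_i^{-1}(r)=\Lambda^{-1}(r^{-1})\circ(\triangle^{-1}\circ P\circ\triangle)\circ\Lambda^{-1}(r)$. The crucial move is to pass to the formal adjoint: $(\triangle^{-1}\circ P\circ\triangle)^{*}=\triangle^{*}\circ P^{*}\circ(\triangle^{*})^{-1}$ is a conjugation by $\triangle^{*}$, and since $\triangle^{*}\circ f=\Lambda^{-1}(f)\circ\triangle^{*}+\triangle^{*}(f)$ holds verbatim (as (\ref{gs2}) does for $\triangle$), one gets $\triangle^{*}\circ P^{*}\circ(\triangle^{*})^{-1}=\Lambda^{-1}(P^{*})+\triangle^{*}(P^{*})\circ(\triangle^{*})^{-1}$, exactly as in Theorem \ref{3.3}. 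Taking adjoints back, the first term becomes $\Lambda^{-1}\circ P\circ\Lambda$, which after the outer conjugation by $\Lambda^{-1}(r^{-1})$ and $\Lambda^{-1}(r)$ collapses to $\Lambda^{-1}(L^{(0)}_{+})$; the second term yields a remainder of the form $\Lambda^{-1}(r^{-1})\circ\triangle^{-1}\circ(\triangle^{*}(P^{*}))^{*}\circ\Lambda^{-1}(r)$, whose order-$\geq1$ part produces the $-\Lambda^{-1}(r^{-1})\triangle^{-1}\circ\triangle^{*}(rL^{(0)}_{+}r^{-1})_{\geq1}\circ\cdots$ term of $L^{(1)}_{+}$ and whose order-$0$ part supplies the $(L^{(0)}_{+})^{*}(r)$ piece of $q_0^{(1)}\triangle^{-1}r_0^{(1)}$.

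For the integral summand, $T_i(r)\circ q_i^{(0)}\triangle^{-1}r_i^{(0)}\circ T_i^{-1}(r)=\Lambda^{-1}(r^{-1})\circ\triangle^{-1}\circ(rq_i^{(0)})\circ\triangle^{-1}\circ(r_i^{(0)}r^{-1})\circ\triangle\circ\Lambda^{-1}(r)$, and identity (\ref{85}) applied to the middle $\triangle^{-1}\circ(\text{function})\circ\triangle^{-1}$ splits it into exactly two terms; one regroups into $T_i(r)(q_i^{(0)})\cdot\triangle^{-1}\cdot(T_i^{-1}(r))^{*}(r_i^{(0)})$, and the other merges with the residue from the differential summand to complete $q_0^{(1)}\triangle^{-1}r_0^{(1)}$, precisely as in Theorem \ref{3.3}. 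Reading off coefficients then gives $q_0^{(1)}=\Lambda^{-1}(r^{-1})$, $r_0^{(1)}=(T_i^{-1}(r))^{*}(L^{(0)})^{*}(r)$, $q_i^{(1)}=T_i(r)(q_i^{(0)})$, $r_i^{(1)}=(T_i^{-1}(r))^{*}(r_i^{(0)})$; that $q_i^{(1)},r_i^{(1)}$ are again (adjoint) eigenfunctions of $L^{(1)}$ is immediate from $L^{(1)}\circ T_i(r)=T_i(r)\circ L^{(0)}$ together with (\ref{eigenfunction}).

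I expect the differential summand to be the main obstacle: conjugating a genuine difference operator by the purely integral operator $T_i(r)$ formally generates infinitely many negative-order terms, and one must show that they telescope into a single $\triangle^{-1}$-term plus an honest difference operator. Reducing the problem to Theorem \ref{3.3} by passing to formal adjoints is what makes this tractable, but one still has to track carefully which half of each remainder lands in $(\,\cdot\,)_{+}$ versus $(\,\cdot\,)_{-}$ — in particular, obtaining $r_0^{(1)}=(T_i^{-1}(r))^{*}(L^{(0)})^{*}(r)$ with the full $L^{(0)}$, rather than only $L^{(0)}_{+}$, requires combining the residues coming from both summands.
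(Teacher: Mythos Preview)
Your proposal is correct and follows essentially the same approach as the paper: split $L^{(0)}$ into its differential and integral parts, conjugate each by $T_i(r)$, handle the differential part via the Leibniz-type identities of the Lemma and the integral part via identity~(\ref{85}), and then merge the two residual $\triangle^{-1}$-terms so that $r_0^{(1)}$ involves the full $(L^{(0)})^{*}$ rather than only $(L^{(0)}_{+})^{*}$. Your device of passing to formal adjoints for the differential summand is only a cosmetic reorganization of the paper's direct computation --- the paper itself rewrites the correction term through $\triangle^{*}$ and $(rL^{(0)}_{+}r^{-1})^{*}$ --- so the two arguments are effectively identical.
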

\begin{proof}
\begin{eqnarray*}
L^{(1)}_+&=&(T_i(r)\circ L^{(0)}\circ T^{-1}_i(r))_+\\
&=&(\Lambda^{-1}(r^{-1})\circ\triangle^{-1}\circ r\circ L^{(0)}_{+}\circ r^{-1}\circ \triangle\circ\Lambda^{-1}(r))_+\\
&=&(\Lambda^{-1}(r^{-1})\circ\Lambda^{-1}(rL^{(0)}_{+}r^{-1})\circ\Lambda^{-1}(r))_+-(\Lambda^{-1}(r^{-1})\circ\triangle^{-1}\circ \triangle(\Lambda^{-1}(rL^{(0)}_{+}r^{-1}))\circ\Lambda^{-1}(r))_+\\
&=&\Lambda^{-1}(L^{(0)}_+)+(\Lambda^{-1}(r^{-1})\triangle^{-1}\circ\triangle^{*}(r^{-1}\circ (L^{(0)}_+)^{*}\circ r)^*\circ \Lambda(r^{-1}))_+\\
&=&\Lambda^{-1}(L^{(0)}_+)+\Lambda^{-1}(r^{-1})\triangle^{-1}\circ\triangle^{*}(r(L^{(0)}_+)^{*}\circ r^{-1})_{\geq1}\Lambda(r^{-1}),
\end{eqnarray*}
where used the identity (\ref{gs2}).
\begin{eqnarray*}
L^{(1)}_-&=&[T_i(r)\circ(L^{(0)}_{+} + \sum_{i=1}^{m}q_i^{(0)}\triangle^{-1}r_i^{(0)})\circ T_i^{-1}(r)]_-\\
&=&(\Lambda^{-1}(r^{-1})\triangle^{-1}\circ r L^{(0)}_{+}\circ r^{-1}\circ\triangle\circ\Lambda(r))_-\\
&&+(\Lambda^{-1}(r^{-1})\triangle^{-1}\circ r(\sum_{i=1}^{m}q_i^{(0)}\triangle^{-1}r_i^{(0)}) r^{-1} \triangle \circ\Lambda^{-1}(r))_-.
\end{eqnarray*}
Where
\begin{eqnarray*}
&&(\Lambda^{-1}(r^{-1})\triangle^{-1}\circ r L^{(0)}_{+}\circ r^{-1}\circ\triangle\circ\Lambda^{-1}(r))_-\\
&=&(\Lambda^{-1}(r^{-1})\circ\Lambda^{-1}(rL^{(0)}_{+}r^{-1})\circ\Lambda^{-1}(r))_--(\Lambda^{-1}(r^{-1})\circ\triangle^{-1}\circ \triangle(\Lambda^{-1}(rL^{(0)}_{+}r^{-1}))\circ\Lambda^{-1}(r))_-\\
&=&(\Lambda^{-1}(L^{(0)}_+))_--(\Lambda^{-1}(r^{-1})\circ\triangle^{-1}\circ \Lambda^{-1}(r)\triangle\Lambda^{-1}((rL^{(0)}_{+}r^{-1})^*)^*)_-\\
&=&\Lambda^{-1}(r^{-1})\circ\triangle^{-1}\circ (T_i^{-1}(r))^{*}(L^{(0)}_{+})^{*} (r),
\end{eqnarray*}
and
\begin{eqnarray*}
&&(\Lambda^{-1}(r^{-1})\triangle^{-1}\circ r(\sum_{i=1}^{m}q_i^{(0)}\triangle^{-1}r_i^{(0)}) r^{-1} \triangle \circ\Lambda^{-1}(r))_-\\
&\stackrel{(\ref{85})}{==}&\sum_{i=1}^{m}\Lambda^{-1}(r^{-1})\triangle^{-1}\circ\triangle(\triangle^{-1}(r q_i^{(0)})\cdot\Lambda^{-1}(r_i^{(0)}r^{-1})) \Lambda^{-1}(r)\\
&&-\sum_{i=1}^{m}\Lambda^{-1}(r^{-1})\triangle^{-1}(r q_i^{(0)})\circ\triangle^{-1}\circ\triangle(\Lambda^{-1}(r_i^{(0)}r)) \Lambda^{-1}(r)\\
&=&\sum_{i=1}^{m}T_i(r)(q_i^{(0)})\cdot\triangle^{-1}\circ (T_i^{-1})^*(r)(r_i^{(0)})\\
&&+\Lambda^{-1}(r^{-1})\triangle^{-1}\circ (T_i^{-1}(r))^{*} (\sum_{i=1}^{m}q_i^{(0)}\triangle^{-1}r_i^{(0)})^*(r).
\end{eqnarray*}
When the above  two formulas are substituted into $L^{(1)}_-$, then
\begin{eqnarray*}
L^{(1)}_-&=&\Lambda^{-1}(r^{-1})\circ\triangle^{-1}\circ (T_i^{-1}(r))^{*}(L^{(0)}_{+})^{*} (r)
+\Lambda^{-1}(r^{-1})\triangle^{-1}\circ (T_i^{-1}(r))^{*} (\sum_{i=1}^{m}q_i^{(0)}\triangle^{-1}r_i^{(0)})^*(r)\\
&&+\sum_{i=1}^{m}T_i(r)(q_i^{(0)})\cdot\triangle^{-1}\circ (T_i^{-1})^*(r)(r_i^{(0)})\\
&=&\Lambda^{-1}(r^{-1})\circ\triangle^{-1}\circ (T_i^{-1}(r))^{*}(L^{(0)})^{*} (r)+\sum_{i=1}^{m}T_i(r)(q_i^{(0)})\cdot\triangle^{-1}\circ (T_i^{-1})^*(r)(r_i^{(0)}).
\end{eqnarray*}
If let $q_0^{(1)}=\Lambda^{-1}(r^{-1})$, $r_0^{(1)}=(T_i^{-1}(r))^{*}(L^{(0)})^*(r)$, $q_i^{(1)}=T_i(r)(q_i^{(0)})$ and $r_i^{(1)}=(T_i^{-1}(r))^{*}(L^{(0)})^*(r)$, then
\begin{eqnarray*}
L^{(1)}_-&=&q_0^{(1)}\triangle^{-1}r_0^{(1)}+\sum_{i=1}^{m}q_i^{(1)}\triangle^{-1}r_i^{(1)}.
\end{eqnarray*}
\end{proof}

\textbf{Remark:}
Although the transformations (\ref{tau1q}) and (\ref{tau1r}) look like as the same as the formula in the semi-discrete KP hierarchy, but there are  main difference between the dKP hierarchy and the cdKP hierarchy. We can see the number of the (adjoint) eigenfunctions has added one after each time of gauge transformation. So for the cdKP hierarchy, to ensure  that the gauge transformed Lax operator preserves the (\ref{laxofcdkp}), it can be  $q_i^{(1)}\triangle^{-1}r_i^{(1)}=0$ for some one $i$. So the generating function $q,r$ of the gauge transformation operator $T_d(q)$ and $T_i(r)$ can not be arbitrarily chosen. This theorem means there are two choices to keep the form of the Lax operator of cdKP hierarchy. They must be selected from the eigenfunction $q_i$ and the adjoint eigenfunction $r_i$ respectively.
And the operator $T_d(q_i)$ and $T_i(r_i)$ will annihilate their generation functions, i.e. ,
\begin{eqnarray*}
T_d(q_i)(q_i)=0,
(T_i^{-1}(r_i))^{*}(r_i)=0,i=1,2,\dots,m.
\end{eqnarray*}
If  the generating function of the gauge transformation in theorem \ref{3.3} was selected for $q_1$, then $q_1^{(1)}=T_d(q_1)(q_1)=0$. And $q_0^{(1)}$ takes over its role.

\begin{theorem}
(a).Under the gauge transformation $L^{(1)}=T_d(q_1)\circ L^{(0)}\circ T_d^{-1}(q_1)$,  the eigenfunction $q_i^{(0)}=q_i$ and  adjoint eigenfunction $r_i^{(0)}=r_i$ of $L^{(0)}=L$ are transformed into new eigenfunction $q_i^{(1)}$ and new adjoint eigenfunction $r_i^{(1)}$ of $L^{(1)}$ by
\begin{align}
&q_1^{(1)}=T_d(q_1^{(0)})(L^{(0)})(q_1), r_1^{(1)}=\Lambda(q_1^{-1}),\\
&q_i^{(1)}=T_d(q_1)q_i^{(0)}, r_i^{(1)}=(T_d^{-1})^{*}(q_1)(r_i^{(0)}),i=2,\cdots,m,
\end{align}
and the $\tau$ function $\tau^{(0)}_{\triangle}$ of $L^{(0)}$ is transformed into the new $\tau$ function $\tau^{(1)}_{\triangle}$ of $L^{(1)}$ by
\begin{equation}\label{tau1q}
\tau^{(1)}_{\triangle}=q_1\tau^{(0)}_{\triangle}.
\end{equation}
(b).Under the gauge transformation $L^{(1)}=T_i(r_1)\circ L^{(0)}\circ T_i^{-1}(r_1)$,  the eigenfunction $q_i^{(0)}=q_i$ and  adjoint eigenfunction $r_i^{(0)}=r_i$ of $L^{(0)}=L$ are transformed into new eigenfunction $q_i^{(1)}$ and new adjoint eigenfunction $r_i^{(1)}$ of $L^{(1)}$ by
\begin{align}
&q_1^{(1)}=\Lambda^{-1}(r_1^{-1}),
r_1^{(1)}=(T_i^{-1}(r_1))^{*}(L^{(0)})^*(r_1),\\
&q_i^{(1)}=T_i(r)q_i^{(0)},
r_i^{(1)}=(T_i^{-1})^{*}(r)(r_i^{(0)}),i=2,\cdots,m,
\end{align}
and the $\tau$ function $\tau^{(0)}_{\triangle}$ of $L^{(0)}$ is transformed into the new $\tau$ function $\tau^{(1)}_{\triangle}$ of $L^{(1)}$ by
\begin{equation}\label{tau1r}
\tau^{(1)}_{\triangle}=\Lambda^{-1}(r_1)\tau^{(0)}_{\triangle}.
\end{equation}
\end{theorem}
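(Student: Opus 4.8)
The plan is to obtain parts (a) and (b) as specializations of Theorem \ref{3.3} and of the theorem above for $T_i(r)$, and to deduce the $\tau$-function rules (\ref{tau1q})--(\ref{tau1r}) exactly as for the unconstrained semi-discrete KP hierarchy. The one genuinely new ingredient is the observation recorded in the Remark: a single gauge step enlarges the negative part of the Lax operator by one summand, so to keep $L^{(1)}$ of the form (\ref{laxofcdkp}) one of the $m+1$ resulting terms $q_i^{(1)}\triangle^{-1}r_i^{(1)}$ must collapse, and this forces the generating function to be one of the $q_i$ in the $T_d$ case and one of the $r_i$ in the $T_i$ case.

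First I would specialize. For (a), set $q=q_1$ in Theorem \ref{3.3}. From (\ref{gfofcdkp1}), $T_d(q_1)(q_1)=\Lambda(q_1)\triangle(q_1^{-1}\cdot q_1)=\Lambda(q_1)\triangle(1)=0$, so the $i=1$ summand drops out of $L^{(1)}_-$ and $L^{(1)}_-=q_0^{(1)}\triangle^{-1}r_0^{(1)}+\sum_{i=2}^{m}q_i^{(1)}\triangle^{-1}r_i^{(1)}$ is again a sum of $m$ terms; relabelling $q_0^{(1)}\to q_1^{(1)}$ and $r_0^{(1)}\to r_1^{(1)}$ produces the displayed formulas. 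Here one uses the facts, already visible in the proof of Theorem \ref{3.3}, that $T_d(q_1)$ carries eigenfunctions of $L^{(0)}$ to eigenfunctions of $L^{(1)}$, that $(T_d^{-1})^{*}(q_1)$ carries adjoint eigenfunctions to adjoint eigenfunctions, and that the new pair $q_0^{(1)}=T_d(q_1)(L^{(0)})(q_1)$, $r_0^{(1)}=\Lambda(q_1^{-1})$ is itself an (adjoint) eigenfunction pair for $L^{(1)}$, so the relabelled $L^{(1)}$ indeed lies in the class (\ref{laxofcdkp}). Part (b) is identical with $T_i(r_1)$ in place of $T_d(q_1)$: from (\ref{gfofcdkp2}) one gets $(T_i^{-1}(r_1))^{*}(r_1)=\Lambda^{-1}(r_1)\,\triangle^{*}(r_1^{-1}\cdot r_1)=\Lambda^{-1}(r_1)\triangle^{*}(1)=0$, so again one term collapses and $q_0^{(1)},r_0^{(1)}$ take over as $q_1^{(1)},r_1^{(1)}$.

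For the $\tau$-functions I would pass to the wave function $w^{(0)}_{\triangle}(n,t;z)=\frac{\tau^{(0)}_{\triangle}(n,t-[z^{-1}])}{\tau^{(0)}_{\triangle}(n,t)}(1+z)^{n}e^{\xi(t,z)}$, which satisfies $L^{(0)}w^{(0)}_{\triangle}=zw^{(0)}_{\triangle}$ and $\partial_{t_l}w^{(0)}_{\triangle}=B^{(0)}_lw^{(0)}_{\triangle}$. Using $L^{(1)}=T_d(q_1)\circ L^{(0)}\circ T_d^{-1}(q_1)$ together with Lemma \ref{td} in the form $\partial_{t_l}T_d(q_1)=B^{(1)}_lT_d(q_1)-T_d(q_1)B^{(0)}_l$, one checks that $w^{(1)}_{\triangle}:=z^{-1}T_d(q_1)w^{(0)}_{\triangle}=z^{-1}\Lambda(q_1)\triangle(q_1^{-1}w^{(0)}_{\triangle})$ solves $L^{(1)}w^{(1)}_{\triangle}=zw^{(1)}_{\triangle}$ and $\partial_{t_l}w^{(1)}_{\triangle}=B^{(1)}_lw^{(1)}_{\triangle}$; the normalizing factor $z^{-1}$ is forced by $\triangle(1+z)^{n}=z(1+z)^{n}$, which makes $w^{(1)}_{\triangle}=(1+O(z^{-1}))(1+z)^{n}e^{\xi}$. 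Expanding $w^{(1)}_{\triangle}$ with the help of the flow relations (\ref{eigenfunction}) for $q_1$ and the Sato-type representation of $w^{(0)}_{\triangle}$, and matching against $w^{(1)}_{\triangle}=\frac{\tau^{(1)}_{\triangle}(n,t-[z^{-1}])}{\tau^{(1)}_{\triangle}(n,t)}(1+z)^{n}e^{\xi(t,z)}$ — precisely the computation carried out for the semi-discrete KP hierarchy in \cite{LiuS}, which is untouched by the constraint — one identifies $\tau^{(1)}_{\triangle}=q_1\tau^{(0)}_{\triangle}$, that is (\ref{tau1q}). For (b) the same argument with $T_i(r_1)=\Lambda^{-1}(r_1^{-1})\circ\triangle^{-1}\circ r_1$ gives $w^{(1)}_{\triangle}=z\,T_i(r_1)w^{(0)}_{\triangle}$, where now $\triangle^{-1}$ lowers the leading power by $z$ and the $\Lambda^{-1}$ shift brings out $\Lambda^{-1}(r_1)$ as the multiplicative factor, whence $\tau^{(1)}_{\triangle}=\Lambda^{-1}(r_1)\tau^{(0)}_{\triangle}$, which is (\ref{tau1r}).

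I expect the last step — reading $\tau^{(1)}_{\triangle}$ off the transformed wave function — to be the only substantial obstacle: one must control the $z\to\infty$ asymptotics of $\triangle^{\pm 1}$ applied to $w^{(0)}_{\triangle}$ and use the (adjoint) eigenfunction equations to recognize $q_1\tau^{(0)}_{\triangle}$ (respectively $\Lambda^{-1}(r_1)\tau^{(0)}_{\triangle}$) as a legitimate $\tau$-function argument. The only other point deserving attention is that $q_0^{(1)},r_0^{(1)}$ genuinely form an (adjoint) eigenfunction pair for $L^{(1)}$, so that the relabelled operator lies in the class (\ref{laxofcdkp}); but this is already contained in the structure of the proofs of Theorem \ref{3.3} and of the $T_i(r)$ theorem and adds nothing new.
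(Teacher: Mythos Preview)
Your proposal is correct and matches the paper's approach: the paper gives no separate proof of this theorem, treating it as an immediate specialization of Theorem~\ref{3.3} and the $T_i(r)$ theorem via the Remark (the annihilation $T_d(q_1)(q_1)=0$, $(T_i^{-1}(r_i))^{*}(r_i)=0$), with the $\tau$-function formulas inherited verbatim from the unconstrained dKP case in \cite{LiuS}. Your wave-function derivation of (\ref{tau1q})--(\ref{tau1r}) is more explicit than anything the paper writes down, but it is exactly the computation \cite{LiuS} performs and to which the Remark alludes.
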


\section{Successive applications of gauge transformations}\label{section4}

In order to investigate the new result of successive transformations by using the
  gauge transformation operators, we will discuss successive
 applications of the difference gauge transformation operator $T_d$, which is
 like to the classical case \cite{cst1,hlc2}. Firstly, we only consider the
  chain of gauge transformation operator of single-channel \cite{cst1} difference type $T_d(q_1)$ starting from the initial Lax operator $L^{(0)}=L$,
\begin{equation}\label{succesT}
L^{(0)}\xrightarrow{T_d^{(1)}(q_1^{(0)})} L^{(1)}\xrightarrow{T_d^{(2)}(q_1^{(1)})} L^{(2)}\xrightarrow{T_d^{(3)}(q_1^{(2)})} L^{(3)}\rightarrow \dots \rightarrow L^{(n-1)}\xrightarrow{T_d^{(n)}(q_1^{(n-1)})} L^{(n)}.
\end{equation}
Here the index $i$ in the gauge transformation operator $T_d^{(i)}(q_1^{(j-1)})$ means the $i$-th gauge transformation, and $q_1^{(j)}$ (or  $r_1^{(j)}$) is transformed by $j$-steps gauge transformations from $q_1$ (or  $r_1$), $L^{(k)}$ is transformed by $k$-step gauge transformations from the initial Lax operator $L$.

 Now we firstly consider successive gauge transformations in (\ref{succesT}). We define the operator as
\begin{equation}\label{TMK}
T_{m}=T_d^{(m)}(q_1^{(m-1)})\circ \dots \circ T_d^{(2)}(q_1^{(1)})\circ T_d^{(1)}(q_1^{(0)}),
\end{equation}
in which
\begin{eqnarray}
q_i^{(j)}=T_d^{(j)}(q_1^{(j-1)})\circ \dots \circ T_d^{(2)}(q_1^{(1)})\circ T_d^{(1)}(q_1^{(0)}) q_i,i,j=1,\cdots,m;\\
r_k^{(j)}=((T_d^{(j)})^{-1})^{*}(q_1^{(j-1)})\circ \dots \circ ((T_d^{(2)})^{-1})^{*}(q_1^{(1)})\circ ((T_d^{(1)})^{-1})^{*}(q_1^{(0)}) r_k,j,k=1,\cdots,m.
\end{eqnarray}

In order to express the determinant representation of $T_m$, we would like to define the generalized discrete $\triangle$-Wronskian for the eigenfunctions $\{q_1,q_2,\dots,q_m\}$ of $L$ as
\begin{equation}
W_m^{\triangle}(q_1,q_2,\dots,q_m) =
\left|
\begin{array}{cccc}
 q_1 &q_2 &  \cdots  &q_m \\
  \triangle q_1 & \triangle q_2 &  \cdots  & \triangle q_m \\
    \vdots  &  \vdots  &  \ddots  &  \vdots   \\
   \triangle^{m-1} q_1 &\triangle^{m-1}q_2 &  \cdots  & \triangle^{m-1}q_m
   \end{array}
   \right|,
\end{equation}
\begin{equation}
IW_{m+1}^{\triangle}(q_1,q_2,\dots,q_m)=\left|
\begin{array}{ccccc}
 q_1\circ \triangle^{-1} &\Lambda( q_1)&\Lambda(\triangle q_1) &  \cdots  &\Lambda(\triangle^{m-2} q_1)\\
  q_2\circ \triangle^{-1} &\Lambda(q_2) &\Lambda(\triangle q_2) &  \cdots  & \Lambda(\triangle^{m-2} q_2) \\
    \vdots  &  \vdots &  \vdots  &  \ddots  &  \vdots   \\
   q_m\circ \triangle^{-1} &\Lambda( q_m)&\Lambda(\triangle q_m) &  \cdots  & \Lambda(\triangle^{m-2} q_m)
   \end{array}
   \right|.
\end{equation}
Using the gauge transformation operator $T_d(q_1)$, the $m$-step gauge transformation can be construct for
\begin{equation}\label{}
L^{(0)}\xrightarrow{T_d^{(1)}(q_1^{(0)})} L^{(1)}\xrightarrow{T_d^{(2)}(q_1^{(1)})} L^{(2)}\xrightarrow{T_d^{(3)}(q_1^{(2)})} L^{(3)}\rightarrow \dots \rightarrow L^{(m-1)}\xrightarrow{T_d^{(m)}(q_1^{(m-1)})} L^{(m)}.
\end{equation}

If $\eta_i$ is defined by
\begin{equation}\label{eta}
\eta_{i+1}\triangleq(L^{(0)})^i\cdot q_1^{(0)},
\end{equation} $\eta_i^{(j)}$ is  the $j$-step transformed form from $\eta_i$.
It is easy got $\eta_{i+1}^{(i)}=q_1^{(i)},i=1,\cdots,m,$ by the mathematical induction.

\begin{theorem}\label{theor1}
\cite{LiuS}The gauge transformation operator $T_m$ and $T_m^{-1}$ have the following determinant representation:
\begin{eqnarray}\label{TM}
T_m&=&T_d^{(m)}(\eta_m^{(m-1)})\circ \dots \circ T_d^{(2)}(\eta_2^{(1)})\circ T_d^{(1)}(\eta_1^{(0)})\\\nonumber
&=&\frac{1}{W_m^{\triangle}(\eta_1,\eta_2,\dots,\eta_m)}
\left|
\begin{array}{ccccc}
 \eta_1 &\eta_2 &  \cdots  &\eta_m &1\\
  \triangle \eta_1 & \triangle \eta_2 &  \cdots  & \triangle \eta_m &\triangle\\
    \vdots  &  \vdots  &  \vdots &  \ddots  &  \vdots   \\
   \triangle^{m-1} \eta_1 &\triangle^{m-1}\eta_2 &  \cdots  & \triangle^{m-1}\eta_m & \triangle^{m-1}\\
   \triangle^{m} \eta_1 &\triangle^{m}\eta_2 &  \cdots  & \triangle^{m}\eta_m & \triangle^{m}
   \end{array}
   \right|,
\end{eqnarray}
and
\begin{equation}
T_m^{-1}=\left|
\begin{array}{ccccc}
 \eta_1\circ \triangle^{-1} &\Lambda( \eta_1)&\Lambda(\triangle \eta_1) &  \cdots  &\Lambda(\triangle^{m-2} \eta_1)\\
  \eta_2\circ \triangle^{-1} &\Lambda(\eta_2) &\Lambda(\triangle \eta_2) &  \cdots  & \Lambda(\triangle^{m-2} \eta_2) \\
    \vdots  &  \vdots &  \vdots  &  \ddots  &  \vdots   \\
   \eta_m\circ \triangle^{-1} &\Lambda( \eta_m)&\Lambda(\triangle \eta_m) &  \cdots  & \Lambda(\triangle^{m-2} \eta_m)
   \end{array}
   \right|
   \frac{(-1)^{m-1}}{\Lambda(W_m^{\triangle}(\eta_1,\eta_2,\dots,\eta_m))}.
\end{equation}
Here the determinant of $T_m$ is expanded by the last column and collecting all sub-determinants on the left side of the $\triangle^{i}$ with the action $"\circ"$. And $T_m^{-1}$ is expanded by the first column and all the sub-determinants are on the right side with the action  $"\circ"$.
\end{theorem}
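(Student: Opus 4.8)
The plan is to identify $T_m$ as the unique \emph{monic} difference operator of order $m$ whose kernel is spanned by the functions $\eta_1,\dots,\eta_m$ of \eqref{eta}, and then to obtain $T_m^{-1}$ by inverting that operator. Throughout one assumes the genericity condition $W_m^{\triangle}(\eta_1,\dots,\eta_m)\neq 0$, so that all displayed quotients make sense and $\eta_1,\dots,\eta_m$ are $\triangle$-linearly independent.

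First I would record two facts about a single factor $T_d(q)$ of \eqref{gfofcdkp1}. Applying the Leibniz rule \eqref{gs2} to $\triangle\circ q^{-1}$ gives $T_d(q)=\Lambda(q)\circ\triangle\circ q^{-1}=\triangle+\bigl(1-\Lambda(q)q^{-1}\bigr)$, a monic first-order difference operator; hence the composite $T_m$ of \eqref{TMK} is a monic difference operator of order $m$, $T_m=\triangle^m+\sum_{j=0}^{m-1}a_j\triangle^{j}$. Secondly $T_d(q)(q)=\Lambda(q)\,\triangle(1)=0$, i.e.\ each factor annihilates its own generating function. Combining this with the identity $q_1^{(j)}=\eta_{j+1}^{(j)}$ and inducting on the number of steps, one gets $T_m(\eta_i)=0$ for $i=1,\dots,m$: for $i\le k$, $T_{k+1}(\eta_i)=T_d^{(k+1)}(q_1^{(k)})\bigl(T_k(\eta_i)\bigr)=0$ by the hypothesis, while for $i=k+1$, $T_k(\eta_{k+1})=\eta_{k+1}^{(k)}=q_1^{(k)}$ and $T_d^{(k+1)}(q_1^{(k)})(q_1^{(k)})=0$.

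Next I would verify that the determinant on the right of \eqref{TM} enjoys the same two properties. Expanded along its last column $(1,\triangle,\dots,\triangle^{m})^{\mathrm T}$ it is a difference operator of order $m$ whose coefficient of $\triangle^{m}$ is the cofactor $W_m^{\triangle}(\eta_1,\dots,\eta_m)$; after dividing by $W_m^{\triangle}$ it is therefore monic. Letting it act on $\eta_k$ replaces the last column by $(\eta_k,\triangle\eta_k,\dots,\triangle^{m}\eta_k)^{\mathrm T}$, which duplicates the $k$-th column, so the determinant vanishes. Thus both $T_m$ and the right side of \eqref{TM} are monic $m$-th order difference operators killing $\eta_1,\dots,\eta_m$; their difference has order $\le m-1$ and still kills all $\eta_i$, and reading this at each fixed $n$ as a linear system with coefficient matrix $\bigl(\triangle^{j}\eta_k(n)\bigr)_{0\le j\le m-1,\,1\le k\le m}$ of nonzero determinant $W_m^{\triangle}(\eta_1,\dots,\eta_m)(n)$ forces the difference to be zero. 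This proves the formula for $T_m$ (the case $m=1$, where it reduces to $T_d(\eta_1)=\triangle+1-\Lambda(\eta_1)\eta_1^{-1}$, is the base of the induction below).

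For $T_m^{-1}$ I would induct on $m$ via $T_m^{-1}=T_{m-1}^{-1}\circ T_d^{-1}(q_1^{(m-1)})$, where $T_d^{-1}(q)=q\circ\triangle^{-1}\circ\Lambda(q^{-1})$ (an immediate check against \eqref{gfofcdkp1}). Applying the already-proved formula for $T_{m-1}$ to $\eta_m$ gives $q_1^{(m-1)}=\eta_m^{(m-1)}=T_{m-1}(\eta_m)=W_m^{\triangle}(\eta_1,\dots,\eta_m)/W_{m-1}^{\triangle}(\eta_1,\dots,\eta_{m-1})$, since that determinant is literally the $\triangle$-Wronskian of all $m$ functions. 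Substituting the inductive expression for $T_{m-1}^{-1}$ and this value of $T_d^{-1}(q_1^{(m-1)})$, what remains is to show that the resulting product --- the $(m-1)$-size $IW$-determinant composed on the right with a scalar multiplication operator, then with $\triangle^{-1}$, then with another scalar multiplication --- collapses into the $m$-size $IW$-determinant of the statement times $(-1)^{m-1}/\Lambda\bigl(W_m^{\triangle}(\eta_1,\dots,\eta_m)\bigr)$. This is a discrete Jacobi/Sylvester identity among $\triangle$-Wronskian cofactors of sizes $m$ and $m-1$; it is precisely here that the shift operators $\Lambda$ in the columns of the $IW$-determinant and the sign $(-1)^{m-1}$ are generated, through the identities $\triangle^{*}=-\triangle\circ\Lambda^{-1}$, $(\triangle^{-1})^{*}=-\Lambda\circ\triangle^{-1}$ and \eqref{85} governing $\triangle^{-1}$ and $\triangle^{*}$. (Alternatively one can bypass the induction and check $T_m\circ(\text{second determinant})=I$ directly, expanding both determinants and cancelling $\triangle\circ\triangle^{-1}=I$.) The one-factor computations and the $\triangle$-Wronskian uniqueness argument are routine; the genuine obstacle is this last bookkeeping step --- matching the explicit inverse of a monic difference operator, written through its kernel functions and the inverse $\triangle$-Wronskian matrix, with the $IW$-determinant --- which is the discrete counterpart of the computation done for the constrained KP hierarchy, with $\partial$ replaced by $\triangle$ and extra $\Lambda$'s arising from the non-commutativity of $\triangle^{-1}$ with multiplication.
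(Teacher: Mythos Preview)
The paper does not actually give its own proof of this theorem: the statement is attributed to \cite{LiuS} and is quoted without argument, so there is nothing in the paper to compare your proposal against line by line.

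That said, your approach is the standard one and is correct for the $T_m$ half. The two observations --- that each $T_d(q)=\triangle+1-\Lambda(q)q^{-1}$ is monic of order one and kills its generator --- together with the induction showing $T_m(\eta_i)=0$ for $i\le m$, pin down $T_m$ uniquely among monic $m$-th order difference operators, and the determinantal expression visibly shares those two properties. The uniqueness argument via the nonsingular $\triangle$-Wronskian matrix is exactly right.

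For $T_m^{-1}$ your outline is honest about where the work lies. The inductive route through $T_m^{-1}=T_{m-1}^{-1}\circ T_d^{-1}(q_1^{(m-1)})$ and the Sylvester-type identity among $\triangle$-Wronskians is correct in principle, but as written it is a sketch rather than a proof: you identify the obstacle without actually discharging it. Your alternative suggestion --- verifying $T_m\circ(\text{claimed }T_m^{-1})=I$ directly --- is in fact the cleaner way through in the discrete setting, and is how the cited reference proceeds: one expands both determinants, uses $\triangle\circ f\circ\triangle^{-1}=\Lambda(f)+(\triangle f)\circ\triangle^{-1}$ repeatedly, and checks that the purely differential part collapses to the identity while the $\triangle^{-1}$-tail vanishes by a cofactor expansion of $W_m^{\triangle}$. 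If you want a self-contained proof rather than a citation, you should carry out one of these two computations explicitly; the $T_m$ half is complete, but the $T_m^{-1}$ half currently defers the key identity.
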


With this representation, the action of $T_m$ on an arbitrary function $q$ is given by the following theorem.
\begin{theorem}
Under the action of $T_m=T_d^{(m)}(q_1^{(m-1)})\circ \dots \circ T_d^{(2)}(q_1^{(1)})\circ T_d^{(1)}(q_1^{(0)})$, the transformed eigenfunctions and the $\tau$-function of the cdKP hierarchy from the arbitrary $L^{(0)}$ are given by
\begin{align}
&q_1^{(m)}=T_m\circ (L^{(0)})^m\cdot q_1^{(0)}= T_m\circ\eta_{m+1}=\frac{W_{m+1}^{\triangle}(\eta_1,\eta_2,\dots,\eta_m,\eta_{m+1})}{W_m^{\triangle}(\eta_1,\eta_2,\dots,\eta_m)},\\
&r_1^{(m)}=\Lambda(q_1^{(m-1)})=\Lambda((T_{m-1}\cdot\eta_m)^{-1})=\frac{\Lambda(W_{m-1}^{\triangle}(\eta_1,\eta_2,\dots,\eta_{m-1}))}{\Lambda(W_{m}^{\triangle}(\eta_1,\eta_2,\dots,\eta_m))},\\
&q_i^{(m)}=T_m\cdot q_i^{(0)}=\frac{ W_{m+1}^{\triangle}(\eta_1,\eta_2,\dots,\eta_m,q_i^{(0)})}{W_m^{\triangle}(\eta_1,\eta_2,\dots,\eta_m)},i=2,\cdots,m,\\
&r_i^{(m)}=(T_m^{-1})^{*}(r_i^{(0)})=(-1)^m\frac{\Lambda( IW_{m+1}^{\triangle}(r_i^{(0)},\eta_1,\eta_2,\dots,\eta_m))}{\Lambda(W_m^{\triangle}(\eta_1,\eta_2,\dots,\eta_m))},i=2,\cdots,m,
\end{align}
and
\begin{equation}
\tau^{(m)}_{\triangle}=W_m^{\triangle}(\eta_1,\eta_2,\dots,\eta_m)\cdot \tau_{\triangle}.
\end{equation} And $\eta_i$ is defined by (\ref{eta}).

\end{theorem}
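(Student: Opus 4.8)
The plan is to reduce all six formulas to three facts already at our disposal: the one-step transformation rules of part (a) of the preceding theorem, the determinant representations of $T_m$ and $T_m^{-1}$ in Theorem \ref{theor1}, and the identity $\eta_{i+1}^{(i)}=q_1^{(i)}$ recorded just after \eqref{eta}. Induction on $m$ enters only to turn the iterated one-step rules into statements about the composite operators $T_m$, $T_m^{-1}$ and to telescope a product; no new computation with the Lax operator is needed.

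First I would treat $q_1^{(m)}$ and $q_i^{(m)}$ for $i\geq 2$. By \eqref{eta} one has $\eta_{m+1}=(L^{(0)})^m q_1^{(0)}$, and the identity $\eta_{i+1}^{(i)}=q_1^{(i)}$ with $i=m$ gives $q_1^{(m)}=\eta_{m+1}^{(m)}=T_m\circ\eta_{m+1}$. Now substitute the function $\eta_{m+1}$ into the determinant representation \eqref{TM} of $T_m$: the last column, whose entries are $1,\triangle,\dots,\triangle^m$ acting from the right, turns into the column with entries $\eta_{m+1},\triangle\eta_{m+1},\dots,\triangle^m\eta_{m+1}$, so the numerator becomes exactly the determinant defining $W_{m+1}^{\triangle}(\eta_1,\dots,\eta_m,\eta_{m+1})$ and we obtain the claimed quotient. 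For $q_i^{(m)}$ with $i\geq 2$, iterating the one-step rule $q_i^{(j)}=T_d^{(j)}(q_1^{(j-1)})q_i^{(j-1)}$ gives $q_i^{(m)}=T_m q_i^{(0)}$; plugging $q_i^{(0)}$ into \eqref{TM} in the same fashion replaces the last column by $q_i^{(0)},\triangle q_i^{(0)},\dots,\triangle^m q_i^{(0)}$, yielding $W_{m+1}^{\triangle}(\eta_1,\dots,\eta_m,q_i^{(0)})/W_m^{\triangle}(\eta_1,\dots,\eta_m)$.

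The formulas for $r_1^{(m)}$ and $\tau^{(m)}_{\triangle}$ are then corollaries. The $m$-th step uses the generating function $q_1^{(m-1)}$, so part (a) of the preceding theorem gives $r_1^{(m)}=\Lambda\big((q_1^{(m-1)})^{-1}\big)$; since $q_1^{(m-1)}=\eta_m^{(m-1)}=T_{m-1}\circ\eta_m=W_m^{\triangle}(\eta_1,\dots,\eta_m)/W_{m-1}^{\triangle}(\eta_1,\dots,\eta_{m-1})$ by the previous paragraph, applying $\Lambda$ to the reciprocal gives $\Lambda(W_{m-1}^{\triangle})/\Lambda(W_m^{\triangle})$. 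For the $\tau$-function, iterating $\tau^{(j)}_{\triangle}=q_1^{(j-1)}\tau^{(j-1)}_{\triangle}$ yields $\tau^{(m)}_{\triangle}=\big(\prod_{j=1}^{m}q_1^{(j-1)}\big)\tau_{\triangle}$, and substituting $q_1^{(j-1)}=W_j^{\triangle}/W_{j-1}^{\triangle}$ with the convention $W_0^{\triangle}=1$ makes the product telescope to $W_m^{\triangle}(\eta_1,\dots,\eta_m)$.

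The remaining and most delicate case is $r_i^{(m)}$, $i\geq 2$. Composing the one-step adjoint rules $r_i^{(j)}=((T_d^{(j)})^{-1})^{*}(q_1^{(j-1)})(r_i^{(j-1)})$ gives $r_i^{(m)}=(T_m^{-1})^{*}(r_i^{(0)})$, so I would take the explicit $T_m^{-1}$ of Theorem \ref{theor1}, form its formal adjoint, and apply it to $r_i^{(0)}$. The hard part is the bookkeeping: taking the adjoint sends each left factor $\eta_j\circ\triangle^{-1}$ of $T_m^{-1}$ to $(\triangle^{-1})^{*}\circ\eta_j=-\Lambda\circ\triangle^{-1}\circ\eta_j$ (identity $(\triangle^{-1})^{*}=-\Lambda\circ\triangle^{-1}$), the sub-determinant factors become functions moved to the left and supply the entries $\Lambda(\triangle^{k}\eta_\ell)$, and after pulling the overall $\Lambda$ out front and tracking the accumulated signs one must recognize the result as precisely $(-1)^{m}\,\Lambda\big(IW_{m+1}^{\triangle}(r_i^{(0)},\eta_1,\dots,\eta_m)\big)/\Lambda\big(W_m^{\triangle}(\eta_1,\dots,\eta_m)\big)$. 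Checking the global sign and the $\Lambda$-shift of every column is where the real effort lies; I would pin down the pattern by first verifying the case $m=1$, where $r_i^{(1)}=(T_d^{-1}(q_1))^{*}(r_i^{(0)})=-\Lambda\big(q_1^{-1}\,\triangle^{-1}(q_1 r_i^{(0)})\big)$ matches the stated expression, before extending to general $m$ by the determinant expansion.
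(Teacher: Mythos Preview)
The paper states this theorem without proof---after the statement the text proceeds directly to Section~\ref{conclusion}---so there is no in-paper argument to compare against. Your proposal is correct and is precisely the argument the paper's scaffolding invites: combine the one-step rules of part~(a) of the preceding theorem with the determinant representations of $T_m$ and $T_m^{-1}$ supplied by Theorem~\ref{theor1} (cited from \cite{LiuS}), use $\eta_{i+1}^{(i)}=q_1^{(i)}$ to identify $q_1^{(m)}$ with $T_m\eta_{m+1}$, and telescope for $\tau^{(m)}_{\triangle}$. The only point worth flagging is the $r_i^{(m)}$ identity: the paper's definition of $IW_{m+1}^{\triangle}$ takes $m$ function arguments, whereas the theorem uses it with the $m+1$ arguments $r_i^{(0)},\eta_1,\dots,\eta_m$, so in carrying out your adjoint computation you should make explicit which convention you are adopting (the intended reading is that the first column becomes $\triangle^{-1}(\eta_j r_i^{(0)})$ and the remaining columns are $\Lambda(\triangle^{k}\eta_j)$, $k=0,\dots,m-1$, with the overall $\Lambda$ then pulled outside).
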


\section{Conclusions and Discussions}\label{conclusion}

In this paper, we have provided two types of gauge transformation in
Theorem 3.4 and 3.5 of the cdKP hierarchy. The new solutions of
the cdKP hierarchy along single-channel  of difference type $T_d$
are expressed explicitly by the determinants in Theorem 4.2. This result also can be apply to the $k$-constrained dKP hierarchy. These
results gives a powerful tool to construct the explicit solution of
the discrete soliton equations in the cdKP hierarchy. Thus we can
study the particular effects of the discretization of dynamical
variables by comparing specific solutions of the discrete and
continue equations. We shall do it in the future. Moreover, the
presence of the gauge transformation in the cdKP hierarchy shows
again that the discretization used to define the dKP hierarchy is well enough
so that the solvability is inherited by it.

 Different from the generating
functions of the gauge transformation of the dKP hierarchy which can be chosen
freely, the ones of the gauge transformations  of the cdKP hierarchy only
can be selected from the eigenfunctions and the adjoint eigenfunctions
in the Lax operator of the cdKP hierarchy, which  leads to the main
particularity of the gauge transformation of the cdKP hierarchy.

{\bf Acknowledgments} {\noindent \small This work is supported by
the NSF of China under Grant No.10971109 and Science Fund of Ningbo
University (No.xk1062, No.XYL11012), K.C.Wong Magna Fund in
Ningbo University.

The authors thanks the valuable comments of the Referees.
}

%%%%%%%%%%%%%%%%% References  %%%%%%%%%%%%%%%%%%%%%%%%%%%%%%%%%%%%%%%
\newpage{}
\vskip20pt

%%%%%%%%%%%%%%%%%%%%%%%%%%%%%%%%%%%%%%%%%%%%%%%%%%%%%%%%%


\begin{thebibliography}{AAA1}
\frenchspacing



%%%%%%%%%%%%%%%%%%%%%%%%55 discrete KP hierarchy
\bibitem{Kupershimidt} B. A. Kupershimidt, Discrete Lax equations and differential-difference calculus, Ast$\acute{e}$risque 123(1985), 1-212.

\bibitem{Iliev}L. Haine and P. Iliev,  Commutative rings of difference operators and an adelic flag manifold, Int. Math. Res.
Not. 6(2000), 281-323.
\bibitem{LiuS2}  S. W. Liu and Y. Cheng, Sato Backlund transformation, additional symmtries and ASvM formular for the discrete KP hierarchy,
 J. Phys. A: Math. Theor. 43(2010), 135202.

\bibitem{czli2012}
C. Z. Li, J. P. Cheng, K. L. Tian, M. H. Li and J. S. He,
Ghost symmetry of the discrete KP hierarchy, arXiv:1201.4419.

\bibitem{zhangdj1}
X. L. Sun, D. J. Zhang, X. Y. Zhu and D. Y. Chen, Symmetries and Lie
algebra of the differential-difference Kadomstev-Petviashvili
hierarchy, Mod. Phys. Lett. B 24 (2010),1033-1042.

\bibitem{zhangdj2}
W. Fu, L. Huang, K. M. Tamizhmani and D. J. Zhang,
Integrable properties of the differential-difference
Kadomtsev-Petviashvili hierarchy and continuum
limits, arXiv:1211.3585.

%%%%%%%%%%%%%%%%%%kp
\bibitem{dkjm} E. Date, M. Kashiwara, M. Jimbo and T. Miwa,
    Nonlinear Integrable Systems---Classical and Quantum Theory, (World Scientific, Singapore, 1983), 39-119.

\bibitem{dl1} L. A. Dickey, Soliton Equations and Hamiltonian Systems
  (2nd Edition)(World Scintific, Singapore, 2003).



%%%%%%%%%%%%%%%%%%%%%%%%%gauge transformation%%%%%%%%%%%%%%%%%%
\bibitem{chau_cmp1992}
L. L. Chau, J. C. Shaw and H. C. Yen, Solving the KP hierarchy by gauge
 transformations , Commun.
Math. Phys. 149(1992), 263-278.

\bibitem{oevel1993}
W. Oevel and W. Schief, Darobux theorem and the KP hierarchy, in Application of Nonlinear
Differential Equations , edited by P. A. Clarkson, Dordrecht, Kluwer Academic Publisher, 1993,
193-206.
\bibitem{nimmo}
J. J. Nimmo, Darboux transformation from reduction of the KP hierarchy [A], in Nonlinear Evolution
Equation and Dynamical Systems, edited by V. G. Makhankov et al., World Scientific, Singapore,
1995, 168-177.



\bibitem{oevelRMP93}
W. Oevel and C. Rogers, Gauge transformations and reciprocal links in 2 + 1
dimensions, Rev. Math. Phys.
5(1993), 299-330.

\bibitem{MWX_LMP1997}
W. X. Ma, Darboux Transformations for a Lax Integrable System in $2n$ Dimensions,  Lett. Math. Phys. 39
(1997), 33-49.

\bibitem{hlc2002}
J. S. He , Y. S. Li and Y. Cheng,
The determinant representation of the gauge transformation operators. Chin. Ann. of Math.B 23(2002), 475-486.



%%%%%%%%%%discrete KP:gt
\bibitem{oe}
W. Oevel, Darboux transformations for integrable lattice systems,  Nonlinear Physics:  Theory
and Experiment,E.Alfinito, L. Martina and F.Pempinelli(eds)( World
Scientific, Singapore,1996), 233-240.



\bibitem{LiuS}
S. W. Liu, Y. Cheng and J. S. He,The determinant representation of the gauge transformation for the discrete KP hierarchy,
 Sci. China Math. 53(2010),1195-1206.

\bibitem{MWX-AMC(2012)}
W. X. Ma, Z. N. Zhu, Solving the $(3 + 1)$-dimensional generalized KP and BKP equations by the multiple exp-function algorithm, Appl. Math. Comput. 218(2012), 11871-11879.

%%%%%%%%%%%%%%%%%%gt for cKP
\bibitem{wo1}
W. Oevel, Darboux theorems and Wronskian formulas for integrable systems. I.
Constrained KP flows, Phys. A 195(1993), 533-576.

\bibitem{anp3}
H. Aratyn, E. Nissimov and S. Pacheva, Constrained KP hierarchies$:$ additional
symmetries, Darboux-B\"acklund solutions and relations to
multi-matrix models, Internat. J. Modern Phys. A 12(1997), 1265-1340
(also see arXiv$:$hep-th/9607234).

\bibitem{cst1}
L. L. Chau, J. C. Shaw and M. H. Tu, Solving the constrained KP hierarchy
by gauge transformations, J. Math. Phys. 38(1997), 4128-4137.

\bibitem{wlg1}
R. Willox, I. Loris and C.  R.  Gilson, Binary Darboux transformations for constrained
 KP hierarchies,  Inverse Problems 13(1997), 849-865.

\bibitem{hlc2}
J. S. He, Y.S. Li and Y. Cheng,  Two choices of the gauge transformation for the
AKNS hierarchy through the  constrained KP hierarchy, J. Math. Phys. 44(2003), 3928-3960.

%%%%%%%%%%%%%%%%%%%%%%%%%%% gt for bkp ckp and cbkp,cckp
\bibitem{hcr1}
J. S He, Y. Cheng and R. A.\ R\"{o}mer, Solving bi-directional soliton
equations in the KP hierarchy by gauge transformation, JHEP 03(2006), 38.

\bibitem{hwc1} J. S. He, Z. W. Wu and  Y. Cheng, Gauge transformations for the
constrained CKP and BKP hierarchies, J. Math. Phys. 48(2007),
113519.

\bibitem{MWX2010}
W. X. Ma, Compatibility equations of the extended KP hierarchy, in: Nonlinear and Modern Mathematical Physics, 94-105, edited by W. X. Ma, X. B. Hu and Q. P. Liu, AIP Conference Proceedings, Vol.1212 (American Institute of Physics, Melville, NY, 2010).

\bibitem{liuxijun}
X. J. Liu and C. Gao, Generalized dressing method for the extended
 two-dimensional Toda lattice hierarchy and its reductions, Sci. China
 Math. 54(2011), 365-380.
%%%%%%%%%%ckP {kss,chengyiliyishenpla1991,chengyiliyishenjmp1995}
\bibitem{kss}
B. G. Konopelchenko, J. Sidorenko and W. Strampp,
 $(1+1)$-dimensional integrable systems as
symmetry constraints of $(2+1)$-dimensional systems,  Phys. Lett. A 157(1991), 17-21.

\bibitem{chengyiliyishenpla1991}
Y. Cheng and Y. S. Li, The constraint of the Kadomtsev-Petviashvili
equation and its special solutions, Phys. Lett. A 157 (1991), 22-26.

\bibitem{chengyiliyishenjmp1995}
Y. Cheng, Constraints of the Kadomtsev-Petviashvili hierarchy, J. Math.
Phys. 33(1992), 3774-3782.
%%%%%%%%%%%% cdKP
\bibitem{lmh2}
M. H. Li, C. Z. Li, K. L. Tian, J. S. He and Y. Cheng, Virasoro type algebraic structure hidden in the constrained discrete
KP hierarchy, arXiv:1207.4322.
%%%%%%%%%%%%%%%%%%
\bibitem{ablowitz04}
M. Ablowitz, B. Prinar and A. Trubatch, Discrete and Continuous Nonlinear Schr\"{o}dinger Systems.
London Mathematical Society Lecture Note Series, No. 302, Cambridge University Press,
2004.
\end{thebibliography}
\end{document}